\documentclass[8pt,twocolumn]{IEEEtran}

\hyphenation{lists}

\makeatletter
\def\ifundefined{\@ifundefined}
\makeatother

\usepackage{epsfig}
\usepackage{dsfont}
\usepackage{amsmath}
\usepackage{amssymb} 
\usepackage{graphicx}
\usepackage{amsthm}
\usepackage[normalem]{ulem}
\usepackage{subfigure}
\usepackage{color}
\usepackage[hidelinks]{hyperref}
\usepackage{bm}
\usepackage{lineno}
\usepackage{multirow}
\usepackage[ruled,vlined]{algorithm2e}
\usepackage{amsmath}
\usepackage{cite}
\usepackage{lettrine}
\usepackage{diagbox} %
\usepackage{rotating}
\usepackage[super]{nth}
\usepackage{algorithmicx,algpseudocode}

\DeclareRobustCommand{\varlambda}{\text{\usefont{OML}{txmi}{m}{it}\symbol{"15}}}
\DeclareSymbolFont{myletters}{OML}{ztmcm}{m}{it}
\DeclareMathSymbol{\uplambda}{\mathord}{myletters}{"15}
\DeclareSymbolFont{matha}{OML}{txmi}{m}{it}%
\DeclareMathSymbol{\varv}{\mathord}{matha}{118}
\makeatletter
\let\start@align@nopar\start@align
\let\start@gather@nopar\start@gather
\let\start@multline@nopar\start@multline
\long\def\start@align{\par\start@align@nopar}
\long\def\start@gather{\par\start@gather@nopar}
\long\def\start@multline{\par\start@multline@nopar}
\makeatother
\def\b{\ensuremath\boldsymbol}
\newcommand{\be}{\begin{equation}}
\newcommand{\ee}{\end{equation}}
\newcommand{\bea}{\begin{eqnarray}}
\newcommand{\eea}{\end{eqnarray}}
\newcommand\ba[1]{\left[ \begin{array}{#1}}
	\def\ea{\end{array}\right]}

\newcommand{\bfi}{\begin{figure}}
	\newcommand{\efi}{\end{figure}}

\newcommand{\R}{\mathbb{R}}

\newcommand{\X}{\mathcal{X}}

\newcommand{\x}{\mathbf{x}}

\newcommand{\cmmnt}[1]{}

\makeatletter
\newcommand*{\transpose}{%
	{\mathpalette\@transpose{}}%
}
\newcommand*{\@transpose}[2]{%
	\raisebox{\depth}{$\m@th#1\intercal$}%
}
\makeatother

\providecommand{\norm}[1]{\lVert#1\rVert}

\DeclareMathOperator*{\argmin}{arg\,min}

\newtheorem{thm}{Theorem}  %

\newtheorem{prop}{Proposition}

\newtheorem{defn}{Definition}       %

\hyphenation{op-tical net-works semi-conduc-tor}

\expandafter\def\expandafter\normalsize\expandafter{%
	\normalsize%
	\setlength\abovedisplayskip{0pt}%
	\setlength\belowdisplayskip{8pt}%
	\setlength\abovedisplayshortskip{-8pt}%
	\setlength\belowdisplayshortskip{2pt}%
}

\begin{document}
	\title{Spectral Eigenfunction Decomposition for Kernel Adaptive Filtering 
		\thanks{This work was supported by the ONR grant N00014-23-1-2084.}
		\thanks{The authors are with the Computational NeuroEngineering Laboratory, University of Florida, Gainesville, FL 32611 USA  (e-mail: likan@ufl.edu; principe@cnel.ufl.edu).}}
	\author{Kan Li, \IEEEmembership{Member,~IEEE} and Jos\'{e} C. Pr\'{i}ncipe, \IEEEmembership{Life Fellow,~IEEE}}

	\maketitle
	
\begin{abstract}
	Kernel adaptive filtering (KAF) integrates traditional linear algorithms with kernel methods to generate nonlinear solutions in the input space. The standard approach relies on the representer theorem and the kernel trick to perform pairwise evaluations of a kernel function in place of the inner product, which leads to scalability issues for large datasets due to its linear and superlinear growth with respect to the size of the training data. Explicit features have been proposed to tackle this problem, exploiting the properties of the Gaussian-type kernel functions. These approximation methods address the implicitness and infinite dimensional representation of conventional kernel methods. However, achieving an accurate finite approximation for the kernel evaluation requires a sufficiently large vector representation for the dot products. An increase in the input-space dimension leads to a combinatorial explosion in the dimensionality of the explicit space, i.e., it trades one dimensionality problem (implicit, infinite dimensional RKHS) for another (\textit{curse of dimensionality}). This paper introduces a construction that simultaneously solves these two problems in a principled way, by providing an explicit Euclidean representation of the RKHS while reducing its dimensionality. We present SPEctral Eigenfunction Decomposition (SPEED) along with an efficient incremental approach for fast calculation of the dominant kernel eigenbasis, which enables us to track the kernel eigenspace dynamically for adaptive filtering. Simulation results on chaotic time series prediction demonstrate this novel construction outperforms existing explicit kernel features with greater efficiency. 
\end{abstract}
\begin{IEEEkeywords} 
Kernel adaptive filter (KAF), kernel method, kernel principal component analysis (KPCA), reproducing kernel Hilbert space (RKHS), time series analysis.
\end{IEEEkeywords}
\section{Introduction}
Kernel methods offer an adaptable and robust framework for solving nonlinear problems in signal processing and machine learning. Traditionally, these methods rely on the representer theorem and the kernel trick to perform pairwise evaluations of a kernel function, which leads to scalability issues for large datasets due to its linear and superlinear growth in computational and storage costs with respect to the data size. 

In the standard kernel approach, data points in the input space are mapped using an implicit nonlinear function into a potentially infinite-dimensional inner product space known as a reproducing kernel Hilbert space (RKHS). The explicit representation is ignored, as inner products are carried out using a real-valued similarity function, called a reproducing kernel. This offers an elegant solution for classification, clustering, and regression, as mapped data points become linearly separable in the RKHS, allowing classic linear methods to be directly applied. However, since points (functions) in the this high-dimensional space are not explicitly accessible, kernel methods scale poorly to large datasets. 

In the case of online kernel adaptive filtering (KAF) algorithms \cite{Liu10}, this implicit mapping behaves as a rolling sum with linear, quadratic, or cubic growth over time. Consequently, much research has focused on reducing the computational load through sparsification techniques \cite{QKLMS, NICE, SNIPGOAL}.

An alternative approach to combat this is to define an explicit mapping or feature where the kernel evaluation is approximated using the dot product in a higher finite-dimensional space or Euclidean space. This framework leverages the continuous shift-invariant properly-scaled kernel function, particularly the Gaussian kernel. One popular method uses random Fourier features (RFF) \cite{rahimi2007RFF} to approximate the kernel, enabling scalable linear filtering techniques to be applied directly on the explicitly transformed data without the computational drawback of the naive kernel method. Although this approach has been applied to various kernel adaptive filtering algorithms \cite{Singh12, Qin17, Bouboulis18}, the approximation quality is still an active research area, as it relies on random sampling with various error bounds proposed \cite{rahimi2007RFF,Sutherland2015,Sun2018,pmlr-v97-li19k}. Notably, RFF methods exhibit higher variance and suboptimal error bounds for the Gaussian kernels compared to other methods \cite{Sutherland2015, Li2019notrick}. 

Alternatively, deterministic feature mappings such as Gaussian quadrature (GQ) \cite{Dao2017} and Taylor series (TS) expansion \cite{Zwicknagl2009, cotter2011explicit}, which is related to the fast Gauss transform in kernel density estimation \cite{Greengard1991, Yang2003}, offer exact-polynomial approximations. This eliminates the undesirable effects of performance variance using random features. We have demonstrated superior performances in KAF using deterministic polynomial features compared to random features \cite{Li2019notrick}.

These approximation methods address the implicitness and infinite dimensional representation of conventional kernel methods. However, achieving an accurate finite approximation for the kernel evaluation requires a sufficiently large vector representation for the dot products. An increase in the input-space dimension leads to a combinatorial explosion in the dimensionality of the explicit space, i.e., it trades one dimensionality problem (implicit, infinite dimensional RKHS) for another (\textit{curse of dimensionality}). In this paper, we present a principled approach that tackles both dimensionality problems simultaneously by providing and maintaining a compact Euclidean representation of the RKHS while systematically reducing its dimensionality.

Principal component analysis (PCA) \cite{Pearson1901, loeve1960probability} plays a crucial role in statistical signal processing such as communications, image processing, and machine learning. PCA enables signal representation in a much lower dimensional subspace, even though the observations lie in a high dimensional space. Here, we apply the same principle to the RKHS and define a lower dimensional subspace using a SPEctral eigenfunction decomposition (SPEED). We find the feature coordinates of each data sample using an eigenmap and directly apply classical linear adaptive filtering methods such as least mean squares (LMS) and recursive least squares (RLS) in this explicit space spanned by the dominant eigenfunctions.

This approach is related to the kernel principal component analysis (KPCA) \cite{KPCA1998} and the kernel principal component regression (KPCR) \cite{KPCR}. The spectrum of the kernel (Gram) matrix is connected to the spectrum of the integral operator for the RKHS associated with the reproducing kernel by a normalization factor \cite{williams2000effect, bengio2003spectral, rosasco10a}. Unlike the previous work, in this paper, we following an adaptive filtering paradigm and present efficient online algorithms for the fast calculation and dynamic tracking of the dominant kernel eigenbasis, along with the corresponding weights of linear models constructed in this finite-dimensional space spanned by the eigenfunctions. 

We do not compute the eigenfunctions explicitly, but rather the projections of the data onto those components, which provides an explicit coordinate system that we can construct linear models on. Finding the eigenfunctions is equivalent to finding the coefficients in the RKHS spanned by the sample basis functions, and the projection onto the dominant eigenfunctions is the dimension-reduction eigenmap we use to derive a lower-dimensional Euclidean representation of the infinite-dimensional RKHS. This approach not only saves computation and memory storage, but also avoids multicollinearity and helps reduce overfitting. This concept is illustrated in Fig. \ref{fig:SPEED}.

\begin{figure}[t!]
	\centering
	\includegraphics[width=0.48\textwidth]{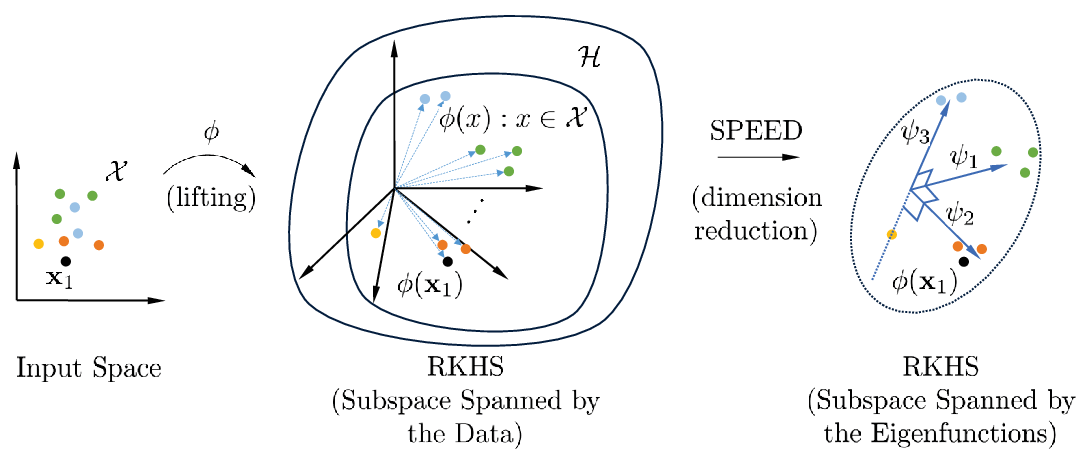}
	\caption{Eigendecomposition in the RKHS.}
	\label{fig:SPEED}
\end{figure}

Unlike previous formulations of explicit feature space mappings, the spectral embedding approach does not rely on a Gaussian-type kernel. Rather, it is data-dependent and kernel-independent, providing a more efficient representation compared to those based on GQ, TS polynomials, or RFF. In contrast, Gaussian approximations are data-agnostic, aiming to ensure convergence for all points (covering the support of the exponential function), which makes them a less efficient choice for representing specific datasets, leading to the curse of dimensionality. SPEED is a principled approach to embed the data statistics directly into the kernel definition for a lower-dimensional representation. 

However, as a batch method, it is not suited for tracking non-stationary input data, such as in adaptive filtering. The Gram matrix size increases quadratically with the number of sample points, as more signal is observed and collected. For a similar algorithm such as KPCA, this results in a significantly high computational cost. One solution is to apply the kernel Hebbian (KHA) algorithm \cite{KHA2005}. While iterative algorithms such as KHA are shown to be computationally efficient, they are not fully \textit{online} in the sense of adaptive signal processing. Other exact incremental KPCA algorithms have been proposed, based on the application of an incremental linear PCA method in the feature space \cite{IKPCA2007,Hoegaerts2007}.

To address this limitation, we introduce an online algorithm for real-time applications, called incremental SPEctral Eigenfunction Decomposition (iSPEED). The rank-1 update for the eigendecomposition of the Gram matrix proposed in \cite{hallgren2018incremKPCA} is used as it provides an efficient iterative process compared to similar methods with the added benefit of modularity. Different rank-1 update algorithms can be swapped easily for potential improvements. 

Adaptive methods are often preferred in streaming data environments such as time series analysis for their efficiency and ability to track changes in data statistics. Our approach introduces a novel algorithm that incrementally updates the eigenfunction basis with incoming data while preserving the learning parameters, effectively supporting transfer learning through a change of basis. This iterative process using perturbation theory allows the user to customize their update schedule or frequency, which is significantly more efficient than batch updates. Furthermore, we provide a strategy to reduce the update frequency using a novelty detector by maintaining a compact dictionary using a small subset of the data, resulting in a sparse incremental SPEctral Eigenfunction Decomposition (siSPEED). 

The rest of this paper is organized as follows. In Section \ref{Sec:RKHS}, we briefly introduce the theory of reproducing kernel Hilbet Spaces. In Section \ref{Sec:SPEED}, we derive the spectral embedding using eigenfunctions of the RKHS. Sections \ref{Sec:SPEED-KAF} presents the SPEED KAF algorithms. Section \ref{Sec:Results} shows the experimental results for chaotic time series prediction and compares the novel eigenfunction formulation with existing kernel adaptive filters. Finally, Section \ref{Sec:Conclusion} concludes this paper.

\section{Theory of RKHS}\label{Sec:RKHS}
In 1904, Hilbert introduced his work on kernels and defined what is known as a definite kernel \cite{hilbert1904grundzuge}. Along with Schmidt, who made significant contributions to the development of integral equations and a simplified geometrized theory, they were instrumental in the establishment of functional analysis \cite{Birkhoff1984}. In his thesis, Schmidt proved the existence of eigenvalues and eigenfunctions for the Fredholm integral equation with a continuous symmetric kernel and introduced the Gram-Schmidt orthogonalization \cite{Schmidt1907}. More importantly, Schmidt laid the groundwork for what is now understood as Hilbert space. Hilbert himself never used the term “space,” but Schmidt spoke of “vectors in infinite-dimensional space” and “geometry in a function space" \cite{schmidt1908auflosung, Birkhoff1984}.

Hilbert space later proved to be essential for the formulation of quantum mechanics \cite{prugovecki1982quantum}. Subsequently, Mercer expanded on Hilbert's ideas, leading to his famous Mercer’s theorem in 1909 \cite{mercer1909functions}. Around the same period (1920 to 1922), Banach, Hahn, and Helly introduced concepts that formed another mathematical structure, which became known as Banach space, named by Fréchet \cite{narici2010topological}, which notably, Hilbert space is a subset.

The first work on reproducing kernel Hilbert spaces was by Aronszajn in 1950 \cite{Aronszajn1950}. Later, these concepts were further developed by Aizerman et al. in 1964 \cite{aizerman1964theoretical}. RKHS remained a purely mathematical topic until its application in machine learning with the introduction of kernel support vector machines (SVMs) by Boser et al. (1992) and Vapnik (1995) \cite{boser1992training, vapnik1995nature}. Eigenfunctions, developed for eigenvalue problems involving operators and functions \cite{WilliamsIDD2000}, were also adopted in machine learning \cite{bengio2003spectral} and physics \cite{kusse2006mathematical}. This is closely related to the RKHS, which is a Hilbert space of functions defined by a reproducing kernel, using a weighted inner product \cite{WilliamsIDD2000}.

\begin{thm}[Mercer's Theorem \cite{mercer1909functions}]\label{theorem_Mercer}
	Let kernel $k: \mathcal{X} \times \mathcal{X} \rightarrow \mathbb{R}$ be a symmetric, continuous, positive semi-definite function.
	
	The Hilbert-Schmidt integral operator $T_k$ associated with $k$ is a linear operator on functions $f(\b{x})$ and outputs a new  positive semi-definite function
	\begin{align}\label{equation_Mercer_theorem_T_operator}
		T_k f(\b{x}) \stackrel{\Delta}{=} \int_\mathcal{X} k(\b{x}, \b{y}) f(\b{y})\, d\b{y}
	\end{align}
	which is a Fredholm integral equation \cite{Schmidt1907}, i.e.,
	\begin{align}
		\int\!\!\! \int k(\b{x}, \b{y}) f(\b{y})\, d\b{x}\, d\b{y} \geq 0.
	\end{align}
	Then, there exists a set of orthogonal bases $\{\psi_i(\cdot)\}_{i=1}^\infty$ in $L^2(\mathcal{X})$ consisting of the eigenfunctions of the operator $T_k$, such that the corresponding sequence of eigenvalues $\{\lambda_i\}_{i=1}^\infty$ is nonnegative 
	\begin{align}\label{equation_Mercer_theorem_eigenfunction_decomposition}
		\int k(\b{x}, \b{y})\, \psi_i(\b{y})\, d\b{y} = \lambda_i\, \psi_i(\b{x}).
	\end{align}
	The eigenfunctions corresponding to the non-zero eigenvalues are continuous on $\mathcal{X}$, and $k$ is represented as \cite{aizerman1964theoretical}
	\begin{align}\label{equation_Mercer_theorem_kernel_representation}
		k(\b{x}, \b{y}) = \sum_{i=1}^\infty \lambda_i\, \psi_i(\b{x})\, \psi_i(\b{y})
	\end{align}
	where the convergence is absolute and uniform.
\end{thm}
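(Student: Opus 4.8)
The plan is to prove Mercer's Theorem in the classical functional-analytic way, treating the Hilbert-Schmidt integral operator $T_k$ as a compact self-adjoint operator on $L^2(\mathcal{X})$ and invoking the spectral theorem, then upgrading the abstract $L^2$-convergence to absolute and uniform pointwise convergence using positive semi-definiteness and continuity. First I would establish that $T_k$ is a well-defined, bounded, self-adjoint, compact operator on $L^2(\mathcal{X})$: boundedness and the Hilbert-Schmidt property follow from $\int\!\!\int |k(\b{x},\b{y})|^2\, d\b{x}\, d\b{y} < \infty$ (finite because $k$ is continuous on a compact domain), self-adjointness follows from the symmetry $k(\b{x},\b{y}) = k(\b{y},\b{x})$, and compactness is a standard consequence of the Hilbert-Schmidt norm being finite.

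Next I would apply the spectral theorem for compact self-adjoint operators, which yields an orthonormal basis $\{\psi_i\}_{i=1}^\infty$ of eigenfunctions of $T_k$ with real eigenvalues $\lambda_i \to 0$; this is precisely equation \eqref{equation_Mercer_theorem_eigenfunction_decomposition}. To show the $\lambda_i$ are nonnegative, I would use the positive semi-definiteness hypothesis: for each eigenfunction, $\lambda_i = \lambda_i \langle \psi_i, \psi_i \rangle = \langle T_k \psi_i, \psi_i \rangle = \int\!\!\int k(\b{x},\b{y}) \psi_i(\b{x}) \psi_i(\b{y})\, d\b{x}\, d\b{y} \geq 0$, exactly the quadratic form the statement requires to be nonnegative. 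The continuity of eigenfunctions corresponding to nonzero eigenvalues then follows from the identity $\psi_i(\b{x}) = \lambda_i^{-1} \int k(\b{x},\b{y}) \psi_i(\b{y})\, d\b{y}$, since continuity of $k$ and dominated convergence make the right-hand side continuous in $\b{x}$.

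The final and hardest step is establishing the kernel representation \eqref{equation_Mercer_theorem_kernel_representation} with absolute and uniform convergence, rather than mere $L^2$ convergence. I would define the partial sums $k_n(\b{x},\b{y}) = \sum_{i=1}^n \lambda_i \psi_i(\b{x}) \psi_i(\b{y})$ and consider the remainder $r_n = k - k_n$. The key observation is that $r_n$ is itself a continuous positive semi-definite kernel (its associated operator is $T_k$ restricted to the orthogonal complement of the first $n$ eigenfunctions), so its diagonal satisfies $r_n(\b{x},\b{x}) \geq 0$. Monotonicity in $n$ together with Dini's theorem would then force the diagonal convergence $\sum_i \lambda_i \psi_i(\b{x})^2 \to k(\b{x},\b{x})$ to be uniform. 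Finally, I would apply the Cauchy-Schwarz inequality to the off-diagonal terms, bounding $\bigl| \sum_{i=n+1}^{m} \lambda_i \psi_i(\b{x}) \psi_i(\b{y}) \bigr|$ by $\bigl( \sum_{i=n+1}^{m} \lambda_i \psi_i(\b{x})^2 \bigr)^{1/2} \bigl( \sum_{i=n+1}^{m} \lambda_i \psi_i(\b{y})^2 \bigr)^{1/2}$, which the uniform diagonal control renders uniformly small, yielding absolute and uniform convergence everywhere.
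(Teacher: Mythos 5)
The paper itself offers no proof of this theorem: it is quoted as classical background (cited to Mercer, with the operator-theoretic framing attributed to Schmidt and Aizerman et al.), and the rest of the paper only uses its conclusion, so there is no internal argument to compare yours against. Judged on its own terms, your proposal follows the standard functional-analytic route --- Hilbert--Schmidt hence compact, spectral theorem for compact self-adjoint operators, nonnegativity of the $\lambda_i$ from the positive semi-definite quadratic form, continuity of the $\psi_i$ from $\psi_i = \lambda_i^{-1} T_k \psi_i$, then Dini plus Cauchy--Schwarz for the convergence upgrade --- and those steps are sound, granted the assumption (implicit in your sketch, and also left implicit in the paper's statement) that $\mathcal{X}$ is compact with finite measure; Mercer's theorem in this form fails on general non-compact domains, and compactness is also what makes $\int\!\!\int |k|^2\,d\b{x}\,d\b{y} < \infty$ automatic and Dini's theorem available.

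There is, however, one genuine gap in your final step. From $r_n(\b{x},\b{x}) \geq 0$ and monotonicity you may conclude only that the diagonal partial sums $\sum_{i=1}^n \lambda_i \psi_i(\b{x})^2$ increase to some limit $L(\b{x}) \leq k(\b{x},\b{x})$; Dini's theorem cannot then ``force'' convergence to $k(\b{x},\b{x})$, because Dini presupposes exactly what is missing, namely that the pointwise limit equals $k(\b{x},\b{x})$ (and in particular is continuous). Identifying that limit is the real work in Mercer's proof. One standard repair: use the bound $\sum_i \lambda_i \psi_i(\b{x})^2 \leq k(\b{x},\b{x})$ together with Cauchy--Schwarz to show that $h(\b{x},\b{y}) = \sum_i \lambda_i \psi_i(\b{x})\,\psi_i(\b{y})$ converges to a continuous kernel; then observe that the residual $r = k - h$ is a continuous positive semi-definite kernel whose associated integral operator is zero, and such a kernel must vanish on the diagonal (otherwise testing its quadratic form against a bump function concentrated near a point with $r(\b{x}_0,\b{x}_0) > 0$ yields a strictly positive value, contradicting that the operator annihilates everything). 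Only after this identification do Dini's theorem and your off-diagonal Cauchy--Schwarz estimate deliver the claimed absolute and uniform convergence; with that insertion your outline becomes the complete classical proof.
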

The eigenfunctions $\psi_i$ form a complete orthogonal system in $L^2(a,b)$, with each corresponding eigenvalue $\lambda_i\geq 0$, which completely characterizes the action of the operator $T_k$.

\subsection{Feature Map or Lifting Function}\label{section_feature_map}

Let $\mathcal{X} \stackrel{\Delta}{=} \{\b{x}_i\}_{i=1}^n$ be a set of data in the original input space, where $\b{x}\in\mathbb{R}^d$. The potentially infinite dimensional feature space or Hilbert space is denoted by $\mathcal{H}$. 

\begin{defn}[Feature Map]
	A feature map is defined as
	\begin{align}
		\b{\phi}: \mathcal{X} \rightarrow \mathcal{H} 
	\end{align}
	which transforms data from the input space to the feature space (Hilbert space), i.e., $\b{x} \mapsto \b{\phi(\x)}$.

	The lifting function $\b{\phi}$ or feature map is potentially infinite-dimensional whose vector elements are \cite{minh2006mercer}:
	\begin{equation}
		\begin{aligned}
			\b{\phi} &= [\phi_1(\b{x}), \phi_2(\b{x}), \dots,  \phi_{n_{\phi}}(\b{x})]^\transpose \\
			&= [\sqrt{\lambda}_1\, \b{\psi}_1(\b{x}), \sqrt{\lambda}_2\, \b{\psi}_2(\b{x}), \dots, \sqrt{\lambda}_{n_{\phi}}\, \b{\psi}_{n_{\phi}}(\b{x})]^\transpose \label{eq:feature_map}
		\end{aligned}
	\end{equation}
	where $\{\b{\psi}_i\}_{i=1}^{n_{\phi}}$ and $\{\lambda_i\}_{i=1}^{n_{\phi}}$ are the eigenfunctions and eigenvalues of the kernel operator $T_k$ in \eqref{equation_Mercer_theorem_eigenfunction_decomposition}, respectively, and  $\b{\phi}\in\mathbb{R}^{n_{\phi}}$, which can be infinite dimensional, i.e., ${n_{\phi}}\leq\infty$.
\end{defn}
	
Combining \eqref{equation_Mercer_theorem_kernel_representation} and \eqref{eq:feature_map}, yields
\begin{align}\label{equation_kernel_inner_product}
	k(\b{x}, \b{y}) = \big\langle \b{\phi}(\b{x}), \b{\phi}(\b{y}) \big\rangle_k = \b{\phi}(\b{x})^\top \b{\phi}(\b{y}) 
\end{align}
i.e., the kernel evaluation between two points in the input space is equivalent to the inner product of the transformed points in the feature space. 

Define the feature matrix as the image of all points in $\mathcal{X}$:
\begin{align}\label{equation_Phi_X_pulled_matrix}
	\b{\Phi} \stackrel{\Delta}{=} [\b{\phi}(\b{x}_1), \b{\phi}(\b{x}_2), \dots, \b{\phi}(\b{x}_n)]
\end{align}
which is $n_{\phi} \times n$ dimensional. 
The kernel or Gram matrix can be calculated as
\begin{align}\label{equation_kernel_inner_product_matrix}
	\mathbf{K} = \big\langle \b{\Phi}, \b{\Phi} \big\rangle_k =  \b{\Phi}^\top \b{\Phi} \in \mathbb{R}^{n\times n}.
\end{align}

\section{Spectral Embedding using Eigenfunctions}\label{Sec:SPEED}
From the theory of RKHS, any solution must lie in the span of all training samples in $\mathcal{H}$. Given $n$ samples, the RKHS is defined as
\begin{align}
	\mathcal{H} = {\rm span}\{\b{\phi}(\b{x}_1), \b{\phi}(\b{x}_2), \dots, \b{\phi}(\b{x}_n)\}.\label{eq:span_data}
\end{align}
By the representer theorem, every function in the RKHS is some combination or weighted sum of these basis functions
\begin{align}
	f &=\sum^n_{i=1}\alpha_i k(\b{x}_i,\cdot)\\
	&= \b{\Phi} \b{\alpha} \label{eq:representertheorem}
\end{align}
i.e., the function $f$ is projected onto a subspace spanned by $\{k(\b{x}_i,\cdot)\}^{n}_{i=1}$. For a detailed discussion and proof of the representation in RKHS please refer to \cite{ghojogh2021RKHS}.

Alternatively, we can use the eigenfunctions of the integral operator to define a lower-dimensional linear approximation
\begin{align}
	\mathcal{H}' = {\rm span}\{\b{\psi}_1(\b{x}), \b{\psi}_2(\b{x}), \dots, \b{\psi}_m(\b{x})\}
\end{align}
where the $m$ most dominant eigenfunctions (corresponding to the $m$ largest eigenvalues) are used as the basis functions to reduce the dimensionality of the feature space, i.e., $m\ll n$ in \eqref{eq:span_data}. Any function in the RKHS can then be decomposed using orthogonal projections onto the subspace spanned by the $m$ eigenfunctions. This construction is the infinite-dimensional analog of the principal component analysis (PCA).

Performing eigendecomposition in $\mathcal{H}$ is intractable, since the eigenfunctions of the integral operator in matrix form is $\frac{1}{n}\b{\Phi}\b{\Phi}^\transpose \in \mathbb{R}^{n_\phi \times n_\phi}$, which is potentially $\infty \times \infty$. However, we can leverage the relationship between the eigendecomposition of the kernel matrix and the integral operator to compute the spectral embedding.

Let $\mathcal{H}$ be the RKHS associated with the kernel function $k$, and $T_\mathcal{H}:\mathcal{H}\rightarrow\mathcal{H}$, an integral operator \cite{williams2000effect, bengio2003out} defined as 
\begin{align}\label{equation_K_operator_integral}
	T_\mathcal{H}f(\b{x}) \stackrel{\Delta}{=} \int_\mathcal{X} k(\b{x},\b{y})\, f(\b{y})\, p(\b{y})\, d\b{y}
\end{align}
where $f \in \mathcal{H}$ and the density function $p$ can be approximated empirically using the sample estimate \cite{williams2000effect}, i.e.,
\begin{align}\label{equation_discrete_kernel_operator}
	T_{n} f(\b{x}) \stackrel{\Delta}{=} \frac{1}{n} \sum_{i=1}^n k(\b{x},\b{x}_i)\, f(\b{x}_i)
\end{align}
which converges to \eqref{equation_K_operator_integral} when $n \rightarrow \infty$, with the operator $T_{n}:\mathcal{H}\rightarrow\mathcal{H}$ and $\b{x}_i$ sampled i.i.d. according to $p$. In the limit, the eigenvectors converge to the eigenfunctions for the linear operator, differing only by a normalization factor. We see that \eqref{equation_Mercer_theorem_T_operator} is a special case of \eqref{equation_K_operator_integral}, when $p$  is uniform in $\mathcal{X}$ and zero outside.
 
\subsection{Spectrum of the Kernel Matrix and the Integral Operator}
Let $(\uplambda_i,\b{\psi}_i)$ denote the $i$th (eigenvalue, eigenfunction) pair of the integral operator $T_\mathcal{H}$, and $(\lambda_i, \b{v}_i)$, the $i$th eigenpair of the Gram matrix $\mathbf{K} \in \mathbb{R}^{n \times n}$, respectively. The two eigensystems are related as follows \cite{bengio2003out, bengio2003spectral, bengio2004learning, rosasco10a}
\begin{align}
	\uplambda_i &= \frac{1}{n}\lambda_i\\
\intertext{and}
	\b{\psi}_i &= \frac{1}{\sqrt[]{\lambda_i}}\b{\Phi}\b{v}_i = \b{\Phi}\underbrace{\frac{1}{\sqrt[]{\lambda_i}}\b{v}_i}_{\b{\alpha}_i}\label{eq:eigenfunction_relation}
\end{align}
where the $i$th eigenfunction lie in the subspace spanned by the data, and the information of the eigenfunction $\b{\psi}_i$ is in the coefficients $\b{\alpha}_i$ of the basis functions, given by the $i$th eigenpair of the Gram matrix, i.e., $\b{\alpha}_i \stackrel{\Delta}{=}\frac{1}{\sqrt[]{\lambda_i}}\b{v}_i\in\mathbb{R}^n$. This elegant solution has the same form as \eqref{eq:representertheorem}.

Now we can approximate \eqref{eq:feature_map} using the spectral embedding feature map
\begin{align}
	\widetilde{\b{\phi}}(\b{x}) \stackrel{\Delta}{=}[\b{\psi}_1(\b{x}),\b{\psi}_2(\b{x}),\cdots,\b{\psi}_m(\b{x})]^\transpose\label{eq:spectral_feature}
\end{align}
where $m$ is set by an appropriate eigenvalue cutoff. Note that the spectral embedding approach is \textit{kernel-independent} (agnostic to the kernel choice and can be easily swapped out for a different kernel) and \textit{data-dependent}, making it a much more versatile representation than the \textit{kernel-dependent} and \textit{data-independent} finite dimensional maps using Taylor series polynomials, Gaussian quadrature, or random Fourier features (depends on sub-Gaussian kernels). The Gaussian approximation is agnostic to the data, as it tries to guarantee convergence for all points (support of the exponential function), making it a much less efficient representation for a particular dataset.

To evaluate \eqref{eq:spectral_feature}, we expand the expression using \eqref{eq:eigenfunction_relation} as
\begin{align}
\widetilde{\b{\phi}}(\b{x}) &= \begin{bmatrix}
	\b{\psi}_1^\transpose\b{\phi}(\b{x}) \\
	\b{\psi}_2^\transpose\b{\phi}(\b{x}) \\
	\vdots \\
	\b{\psi}m^\transpose\b{\phi}(\b{x}) \\\end{bmatrix} = \begin{bmatrix}
	\frac{1}{\sqrt[]{\lambda_1}}\b{v}_1^\transpose\b{\Phi}^\transpose \\
	\frac{1}{\sqrt[]{\lambda_2}}\b{v}_2^\transpose\b{\Phi}^\transpose\\
	\vdots \\
	\frac{1}{\sqrt[]{\lambda_m}}\b{v}_m^\transpose\b{\Phi}^\transpose\\\end{bmatrix} \b{\phi}(\b{x}) \nonumber\\
	&\stackrel{\Delta}{=} \underbrace{\b{\Lambda}^{\text{-}\frac{1}{2}}\b{V}^\transpose}_{\b{\Psi}} \b{\Phi}^\transpose\b{\phi}(\b{x}) = \b{\Psi}\underbrace{\begin{bmatrix}
	k(\b{x}_1,\b{x}),\cdots,
	k(\b{x}_n,\b{x})\end{bmatrix}^\transpose}_{\mathbf{k}_{\b{x}}}\nonumber\\
	&= \b{\Psi}\mathbf{k}_{\b{x}}
  \label{Eq:eigenfunction}
\end{align}
where $\b{\Psi}\in\mathbb{R}^{m\times n}$ is the eigenmap composed using the $m$ most dominant eigenfunctions. This eigenmap is fixed for a given eigendecomposition and is independent of the input. For a given sample $\b{x}$, we can obtain the explicit $m$-dimensional representation by applying the fixed eigenmap to the kernel evaluation vector, $\mathbf{k}_{\b{x}}\in\mathbb{R}^{n}$, between the input sample and the dictionary of $n$ samples used to compute the $n \times n$ Gram matrix. Note the $i$th eigenfunction is the $i$th component of the KPCA of $\b{x}$, up to centering \cite{williams2000effect, bengio2003spectral,rosasco10a}.

\section{SPEctral Eigenfunction Decomposition Kernel Adaptive Filtering (SPEED-KAF)}\label{Sec:SPEED-KAF}
Using the explicit mapping defined by \eqref{Eq:eigenfunction}, we can reformulate kernel adaptive algorithms in the space spanned by the eigenfunctions instead of the sample basis functions. Without loss of generality, we focus on the three most popular kernel adaptive filtering algorithms: the kernel LMS (KLMS) \cite{KLMS}, the kernel RLS (KRLS) \cite{Engel04}, and the Extended-KRLS (Ex-KRLS) \cite{EKRLS}. The same principle can be easily applied to any linear filtering technique to achieve nonlinear solutions.

Once explicitly mapped into the Euclidean subspace spanned by a finite number of eigenfunctions, the linear LMS and RLS algorithms can be directly applied, with constant complexity $O(m)$ and $O(m^2)$, respectively. The SPEED-KLMS is summarized in Algorithm \ref{alg:FS-LMS}, the SPEED-KRLS in Algorithm \ref{alg:FS-RLS}, and the SPEED-Ex-KRLS in Algorithm \ref{alg:FS-ExRLS}. 
\begin{algorithm}[h!]
	\textbf{Initialization:}\\
	$\widetilde{\b{\phi}}(\cdot):\X\rightarrow\R^m$ eigenfunction map\\
	$\b{\omega}_0 = \textbf{0}$: feature space weight vector, $\b{\omega}\in\R^m$\\
	$\eta$: learning rate\\
	\textbf{Computation:}\\
	\For{$i = 1, 2, \cdots$}{
		$e_i = y_{i} -\b{\omega}^\transpose_{i-1}\widetilde{\b{\phi}}(\b{x}_i)$\\
		$\b{\omega}_{i} = \b{\omega}_{i-1}+\,\eta e_i\widetilde{\b{\phi}}(\b{x}_i)$
	}
	\normalsize
	\caption{SPEED-KLMS Algorithm}
	\label{alg:FS-LMS}	
\end{algorithm}

\begin{algorithm}[h!]
	\textbf{Initialization:}\\
	$\widetilde{\b{\phi}}(\cdot):\X\rightarrow\R^m$ eigenfunction map\\
	$\varlambda$: forgetting factor\\
	$\delta$: initial value to seed the inverse covariance matrix $\mathbf{P}$\\
	$\mathbf{P}_0=\delta\mathbf{I}$: where $\mathbf{I}$ is the $D\times D$ identity matrix\\
	$\b{\omega}_0 = \textbf{0}$: feature space weight vector, $\b{\omega}\in\R^m$\\
	\textbf{Computation:}\\
	\For{$i = 1, 2, \cdots$}{
		$e_i = y_{i} -\b{\omega}^\transpose_{i-1}\widetilde{\b{\phi}}(\b{x}_i)$\\
		$\mathbf{g}_{i}=\mathbf{P}_{i-1}\widetilde{\b{\phi}}(\b{x}_i)\left\{\varlambda+\widetilde{\b{\phi}}(\b{x}_i)^\transpose\mathbf{P}_{i-1}\widetilde{\b{\phi}}(\b{x}_i)\right\}^{-1}$\\	
		$\mathbf{P}_{i}=\varlambda^{-1}\mathbf{P}_{i-1}-\mathbf{g}_{i}\widetilde{\b{\phi}}(\b{x}_i)^\transpose\varlambda^{-1}\mathbf{P}_{i-1}$\\
		$\mathbf{w}_{i} = \mathbf{w}_{i-1}+\,\mathbf{g}_{i}e_i$
	}
	\normalsize
	\caption{SPEED-KRLS Algorithm}
	\label{alg:FS-RLS}	
\end{algorithm}

\begin{algorithm}[h!]
	\textbf{Initialization:}\\
	$\widetilde{\b{\phi}}(\cdot):\X\rightarrow\R^m$ eigenfunction map\\
	$\varlambda$: forgetting factor\\
	$\mathbf{A}\in\R^{D\times D}$: state transition matrix\\
	$\delta$: initial value to seed the inverse covariance matrix $\mathbf{P}$\\
	$\mathbf{P}_0=\delta\mathbf{I}$: where $\mathbf{I}$ is the $D\times D$ identity matrix\\
	$\b{\omega}_0 = \textbf{0}$: feature space weight vector, $\b{\omega}\in\R^m$\\
	\textbf{Computation:}\\
	\For{$i = 1, 2, \cdots$}{
		$e_i = y_{i} -\b{\omega}^\transpose_{i-1}\widetilde{\b{\phi}}(\b{x}_i)$\\
		$\mathbf{g}_{i}=\mathbf{A}\mathbf{P}_{i-1}\widetilde{\b{\phi}}(\b{x}_i)\left\{\varlambda+\widetilde{\b{\phi}}(\b{x}_i)^\transpose\mathbf{P}_{i-1}\widetilde{\b{\phi}}(\b{x}_i)\right\}^{-1}$\\	
		$\mathbf{P}_{i}=\mathbf{A}\left\{\varlambda^{-1}\mathbf{P}_{i-1}-\mathbf{g}_{i}\widetilde{\b{\phi}}(\b{x}_i)^\transpose\varlambda^{-1}\mathbf{P}_{i-1}\right\}\mathbf{A}^\transpose+\varlambda q\mathbf{I}{}$\\
		$\mathbf{w}_{i} = \mathbf{A}\mathbf{w}_{i-1}+\,\mathbf{g}_{i}e_i$
	}
	\normalsize
	\caption{SPEED-Ex-KRLS Algorithm}
	\label{alg:FS-ExRLS}
\end{algorithm}

In practice, we would first need an initial batch of data to perform the eigendecomposition of a Gram matrix, before we can construct the finite-dimensional explicit Hilbert space spanned by the eigenfunctions and perform filtering. This evaluation is unsupervised, and only depends on the kernel choice and kernel parameter(s). Since the Gaussian kernel is a measure of similarity between points and distance preserving, we can preprocess the data for a sparse representation ($\rm SPEED_{sparse}$) using a compact dictionary. By only allowing samples exhibiting sufficient novelty (based on a distance metric in the input space) to be included, we can reduce the training size significantly, thus the kernel matrix dimension.

\subsection{Online Eigenfunction and Spectral Decomposition}  

Like KPCA, the finite-dimensional feature map defined in \eqref{Eq:eigenfunction} is a batch method and not suitable for tracking non-stationary input data. Here we present an online algorithm for real-time applications. Adaptive methods are often more desirable for increased time efficiency in streaming data settings with the ability to track changes in statistics. We present a novel algorithm to incrementally update the eigenfunction features while preserving the learning parameters (transfer learning via a change of basis), allowing the user to set the desired update schedule or frequency, based on rank-1 updates to the eigendecomposition of the Gram matrix, which is more computationally efficient than batch methods for iterative updates. Incremental update also improves memory efficiency and enables empirical assessment of the eigendecomposition.

Unlike the covariance matrix in linear PCA, the kernel matrix increases in size with each additional data point. This expansion must be considered, and its impact on the eigensystem needs to be evaluated. The incremental kernel matrix $\mathbf{K}_{n+1}$ created with $n+1$ data examples can be expressed as an expansion and symmetric rank-1 updates to the existing kernel matrix $\mathbf{K}_{n}$. Similarly, the eigendecomposition can be updated using the previous eigensystem. Many existing algorithms have been suggested to perform rank-1 modification to the symmetric eigenproblem \cite{golub1973some, bunch1978rank, dongarra1987fully, sorensen1991orthogonality, gu1994stable, brand2006fast}. Without loss of generality, we apply the rank-1 update algorithm for eigenvalues from \cite{golub1973some} and determine the eigenvectors according to \cite{bunch1978rank}, and previously proposed for incremental KPCA \cite{hallgren2018incremKPCA}.

\subsection{Rank-1 Kernel Eigendecomposition Update}

Given the eigendecomposition of $\mathbf{K}_{n} = \mathbf{V}_n \mathbf{\Lambda}_n \mathbf{V}_n^\transpose$, a rank-1 update can be used to obtain the eigensystem of the incrementally expanded kernel matrix $\mathbf{K}_{n+1} = \mathbf{V}_{n+1} \mathbf{\Lambda}_{n+1} \mathbf{V}_{n+1}^\transpose$.

Denote the kernel evaluation between two points $\b{x}_i$ and $\b{x}_j$ as $k_{i,j} \stackrel{\Delta}{=} k(\b{x}_i, \b{x}_j)$ and the kernel evaluation vector as $\mathbf{k}_{n+1} \stackrel{\Delta}{=} [k_{1,n+1}, k_{2,n+1}, \cdots, k_{n,n+1}]^\transpose$, i.e., a column vector with elements $\{k_{i,n+1}\}_{i=1}^n$ and let
\begin{align}
	\b{\kappa}_1  &\stackrel{\Delta}{=} [\mathbf{k}_{n+1}^\transpose, \frac{1}{2} k_{n+1,n+1} \;]^\transpose\\
	\b{\kappa}_2 &\stackrel{\Delta}{=} [\mathbf{k}_{n+1}^\transpose, \frac{1}{4} k_{n+1,n+1} \; ]^\transpose \\
	\rho &\stackrel{\Delta}{=} 4/k_{n+1,n+1}
\end{align}
then we have
\begin{align} \label{eq:k0update}
	\mathbf{K}_{n+1} &= \begin{bmatrix}
			\mathbf{K}_{n} & \mathbf{0}_n \\
			\mathbf{0}_n^\transpose & \frac{1}{4} k_{n+1,n+1}
		\end{bmatrix}
		+ \rho \b{\kappa}_1 \b{\kappa}_1^\transpose - \rho \b{\kappa}_2 \b{\kappa}_2^\transpose \\
		&\stackrel{\Delta}{=}\mathbf{K}^{0}_{n+1} + \rho \b{\kappa}_1 \b{\kappa}_1^\transpose - \rho \b{\kappa}_2 \b{\kappa}_2^\transpose
\end{align}
corresponding to an expansion of $\mathbf{K}_{n}$ to $\mathbf{K}^{0}_{n+1}$, where $\mathbf{0}_n\in\mathbb{R}^n$ is a column vector of zeros, followed by two rank-1 updates. 

Compared to the eigensystem of $\mathbf{K}_{n}$, the zero-vector expansion $\mathbf{K}^{0}_{n+1}$ will have an additional eigenvalue $\lambda_{n+1} = \frac{1}{4} k_{n+1,n+1}$ and corresponding eigenvector $v_{n+1} = [0,0,\cdots,1]^\transpose$. Since its eigenvalues are all non-negative, $\mathbf{K}^{0}_{n+1}$ is a symmetric, positive semi-definite (SPSD) matrix. It remains SPSD after the first update, as it is the sum of two SPSD matrices ($\b{\kappa}_1\b{\kappa}_1^\transpose$ is a Gram matrix, if each element is viewed as an individual vector). After the second update, the resulting matrix remains SPSD, as this property holds for $\mathbf{K}_{n+1}$.  Without loss of generality, if we use the Gaussian kernel (universal), $k_{n+1,n+1}$ simplifies to 1.

Next, we modify the symmetric eigenvalue problem after a rank-1 perturbation. Given the eigendecomposition of a SPSD matrix $\mathbf{K} = \mathbf{V} \mathbf{\Lambda} \mathbf{V}^\transpose$, let
\begin{align}
	{\mathbf{K}}' &\stackrel{\Delta}{=} \mathbf{V} \mathbf{\Lambda} \mathbf{V}^\transpose + \rho \b{\kappa} \b{\kappa}^\transpose = \mathbf{V} (\mathbf{\Lambda} + \rho \widetilde{\b{\kappa}} {\widetilde{\b{\kappa}}}^\transpose)\mathbf{V}^\transpose\\
	&\stackrel{\Delta}{=}\mathbf{V}' \mathbf{\Lambda}' \mathbf{V}'^\transpose
\end{align}
where $\widetilde{\b{\kappa}} = \mathbf{V}^\transpose \b{\kappa}$, and define the orthogonal decomposition of $\mathbf{\Lambda}' = \mathbf{\Lambda} + \rho \widetilde{\b{\kappa}} {\widetilde{\b{\kappa}}}^\transpose \stackrel{\Delta}{=} \widetilde{\mathbf{V}}\widetilde{\mathbf{\Lambda}}\widetilde{\mathbf{V}}^\transpose $ \cite{bunch1978rank}. Then the eigendecomposition of $\mathbf{K}'$ is given by $\mathbf{V}\widetilde{\mathbf{V}} \widetilde{\mathbf{\Lambda}} \widetilde{\mathbf{V}}^\transpose \mathbf{V}^\transpose$ with the eigenvalues unchanged from $\mathbf{\Lambda}'$ and the eigenvectors $\mathbf{V}' = \mathbf{V}\tilde{\mathbf{V}}$, because the product of two orthogonal matrices is also orthogonal, and the eigendecomposition is unique as long as all the eigenvalues are distinct.

The eigenvalues of $\mathbf{\Lambda}'$ can be calculated in $O(n^2)$ time by finding the roots of the secular equation or characteristic polynomial \cite{golub1973some}
\begin{equation} \label{eq:eigvalupd}
	w(\tilde{\lambda}) \stackrel{\Delta}{=} 1  + \rho \sum_{i=1}^n \frac{{\widetilde{\b{\kappa}}_i}^2}{\lambda_i - \tilde{\lambda}}
\end{equation}
where the modified eigenvalues $\tilde{\lambda}$ are bounded:
\begin{equation} \label{eq:eigbounds}
	\begin{aligned}[c]
		& \lambda_i \le \tilde{\lambda}_i \le \lambda_{i+1}  \\
		& \lambda_n \le \tilde{\lambda}_n \le \lambda_n + \rho \widetilde{\b{\kappa}}^\transpose \widetilde{\b{\kappa}} \\
		& \lambda_{i-1} \le \tilde{\lambda}_i \le \lambda_i \\
		& \lambda_1 + \rho {\widetilde{\b{\kappa}}}^\transpose \widetilde{\b{\kappa}} \le \tilde{\lambda}_1 \le \lambda_1
	\end{aligned}
	\hspace{10pt}
	\begin{aligned}[c]
		i &= 1, 2, \cdots, n-1, \; &\rho > 0 \\
		&&\rho > 0 \\
		i &= 2, 3, \cdots, n, \; &\rho < 0 \\
		&&\rho < 0
	\end{aligned}
\end{equation}
which serves as starting estimates for the root-finding algorithm. It is important to sort the eigenpairs after modifying the eigensystem to ensure the validity of the bounds.

Once the updated eigenvalues have been calculated, the eigenvectors of the perturbed matrix ${\mathbf{K}}'$ are given by \cite{bunch1978rank}
\begin{equation} \label{eq:eigvecupd}
	\mathbf{V}'_i = \frac{\mathbf{V} \mathbf{D}_i^{-1}\widetilde{\b{\kappa}}}{\| \mathbf{D}_i^{-1}\widetilde{\b{\kappa}} \|}
\end{equation}
where $\mathbf{D}_i \stackrel{\Delta}{=} \mathbf{\Lambda} - \tilde{\lambda}_i \mathbf{I}$. The complexity of computing each eigenvector $\mathbf{V}'_i$ is $O(n^2)$ or $O(n^3)$ to update the entire $\mathbf{V}'$.

\subsection{Transfer of Learning Parameters via Change of Basis}
For incremental learning, once the eigenfunctions are updated, the coordinates of the previously learned weights must also be modified, due to the change of basis.

The weight vector $\b{\omega}$ in the subspace spanned by the eigenfunctions can be viewed as the image of a point or function $\b{\phi}(\b{x}_\omega)$ in the RKHS under the linear transformation $\b{\Psi} \b{\Phi}^\transpose$. Instead of trying to find the preimage $\b{\phi}(\b{x}_\omega)$, which is infinite dimensional, we can set $\mathbf{k}_{\b{\omega}}\stackrel{\Delta}{=}\b{\Phi}^\transpose\b{\phi}(\b{x}_\omega)\in\mathbb{R}^{n}$ as the preimage, i.e., $\b{\omega} = \b{\Psi}\mathbf{k}_{\b{\omega}}$.

Since the eigenfunction matrix $\b{\Psi}$ in \eqref{Eq:eigenfunction} has full row rank (the rows are linearly independent, being scaled, transposed eigenvectors of the Gram matrix), the right inverse $\b{\Psi}^{-1}_{\rm right}\stackrel{\Delta}{=} \b{\Psi}^\transpose (\b{\Psi}\b{\Psi}^\transpose)^{-1}$ exists. It has the zero vector only in its left nullspace, and $\b{\Psi}^{-1}_{\rm right}\b{\Psi}$ projects $\mathbb{R}^n$ onto the row space of $\b{\Psi}$. From the learned weight vector $\b{\omega}\in \mathbb{R}^m$ (corresponding to $\b{\Psi}$), we can find the coordinates of the new weight vector $\b{\omega}'\in \mathbb{R}^m$ (under the updated eigenmap $\b{\Psi}'$) using the right inverse by first finding its preimage $\mathbf{k}_{\b{\omega}}$.
\begin{prop}[]
For sets $X$ and $Y$, if a function $f: X\rightarrow Y$ has a right inverse $g: Y\rightarrow X$, then every element of its codomain has a preimage in its domain: $f\circ g = 1_Y \iff f$ is onto $Y$ (subjective).
\end{prop}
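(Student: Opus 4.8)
The plan is to prove the biconditional $f \circ g = 1_Y \iff f \text{ is onto } Y$ by establishing each direction separately, since this is a standard set-theoretic characterization. First I would assume the existence of a right inverse $g: Y \rightarrow X$ satisfying $f \circ g = 1_Y$, and show surjectivity directly: pick an arbitrary $y \in Y$, set $x \stackrel{\Delta}{=} g(y) \in X$, and observe that $f(x) = f(g(y)) = (f \circ g)(y) = 1_Y(y) = y$. Thus every $y \in Y$ has the explicit preimage $g(y)$, so $f$ is onto. This is the easy direction and requires no choice beyond evaluating the given function $g$.

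For the converse, I would assume $f$ is onto and construct a right inverse. By surjectivity, for each $y \in Y$ the preimage set $f^{-1}(\{y\}) = \{x \in X : f(x) = y\}$ is nonempty. I would then define $g: Y \rightarrow X$ by selecting, for each $y$, some element $g(y) \in f^{-1}(\{y\})$, which gives $f(g(y)) = y$ for all $y$, i.e., $f \circ g = 1_Y$. This verifies that a right inverse exists whenever $f$ is surjective, completing the biconditional.

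The main obstacle in the converse direction is the selection step: choosing one preimage $g(y)$ simultaneously for every $y \in Y$ requires the \emph{axiom of choice} when $Y$ is infinite, since we are making infinitely many arbitrary selections with no canonical rule. In the present application this subtlety is benign, because the relevant codomain is the finite-dimensional coordinate space $\mathbb{R}^m$ of weight vectors and the map of interest is the concrete linear transformation $\b{\Psi}$, for which an explicit right inverse $\b{\Psi}^{-1}_{\rm right} = \b{\Psi}^\transpose(\b{\Psi}\b{\Psi}^\transpose)^{-1}$ is already available from the full-row-rank property; no nonconstructive choice is needed. I would therefore note that for the linear setting the forward direction alone suffices to guarantee a computable preimage $\mathbf{k}_{\b{\omega}} = \b{\Psi}^{-1}_{\rm right}\b{\omega}$, so that applying the updated eigenmap yields $\b{\omega}' = \b{\Psi}'\mathbf{k}_{\b{\omega}}$ and realizes the change of basis.
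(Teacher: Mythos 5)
Your proof is correct and takes essentially the same route as the paper's: one direction obtains surjectivity by evaluating $f$ at $g(y)$, and the converse builds the right inverse by arbitrarily selecting one element of each nonempty preimage set, exactly as in the paper's ``arbitrarily chosen $x$'' step. Your additional remarks on the axiom of choice and on why the concrete full-row-rank eigenmap $\b{\Psi}$ with explicit right inverse $\b{\Psi}^\transpose(\b{\Psi}\b{\Psi}^\transpose)^{-1}$ avoids any nonconstructive selection are sensible glosses, not a different argument.
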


\begin{proof}
$\Leftarrow$) Assume $f$ is surjective. Then, $\forall y\in Y$,  $\exists x\in X$ such that $f(x)=y$. Define a function $g \stackrel{\Delta}{=} f^{-1}_{\rm right}: Y\rightarrow X$ (if there is more than one $x$, the inverse function $g$ maps $y$ to an arbitrarily chosen $x$ such that $g$ is well-defined) and it follows:
\begin{align}
	f\circ g(y) = f(g(y)) = f(x_y) = y
\end{align}
and $f\circ g = 1_Y$.

$\Rightarrow$) Assume $f: X\rightarrow Y$ and $g: Y\rightarrow X$ such that $f\circ g =1_Y$, then for each $y\in Y$, the preimage is $x_y\stackrel{\Delta}{=} g(y) \in X$, as  $f(x_y)=f\circ g(y) = 1_Y(y)=y$. Hence, $f$ is surjective.
\end{proof}

Using \eqref{Eq:eigenfunction}, the right inverse simplifies to
\begin{align}
\b{\Psi}^{-1}_{\rm right}&\stackrel{\Delta}{=} \b{\Psi}^\transpose (\b{\Psi}\b{\Psi}^\transpose)^{-1}\\
&= \left(\b{\Lambda}^{\text{-}\frac{1}{2}}\b{V}^\transpose\right)^\transpose \left(\b{\Lambda}^{\text{-}\frac{1}{2}}\b{V}^\transpose\left(\b{\Lambda}^{\text{-}\frac{1}{2}}\b{V}^\transpose\right)^\transpose\right)^{-1}\\
&= \left(\b{V}\b{\Lambda}^{\text{-}\frac{1}{2}}\right) \left(\b{\Lambda}^{\text{-}\frac{1}{2}}\b{V}^\transpose\b{V}\b{\Lambda}^{\text{-}\frac{1}{2}}\right)^{-1}\\
&= \b{V}\b{\Lambda}^{\frac{1}{2}}\label{eq:rightInv}
\end{align}
where the transpose of a diagonal matrix is itself $(\b{\Lambda}^{\text{-}\frac{1}{2}})^\transpose = \b{\Lambda}^{\text{-}\frac{1}{2}}$ and the inverse of an orthogonal matrix is its transpose $\b{V}^\transpose\b{V} = \mathbf{I}$. We see that the right inverse can be easily obtained without having to perform any inversion due to the properties of eigendecomposition. From \eqref{eq:rightInv}, the preimage $\mathbf{k}_{\b{\omega}}$ can be computed as
\begin{align}
	\b{\Psi}\mathbf{k}_{\b{\omega}} &= \b{\omega}\\
	\b{\Psi}^{-1}_{\rm right}\b{\Psi}\mathbf{k}_{\b{\omega}} &=\b{\Psi}^{-1}_{\rm right}\b{\omega}\\
	\b{V}\b{\Lambda}^{\frac{1}{2}}\b{\Lambda}^{\text{-}\frac{1}{2}}\b{V}^\transpose\mathbf{k}_{\b{\omega}}&=\b{V}\b{\Lambda}^{\frac{1}{2}}\b{\omega}\\
	\mathbf{k}_{\b{\omega}}&=\b{V}\b{\Lambda}^{\frac{1}{2}}\b{\omega}.\label{eq:preimage}
\end{align}

The new coordinates of the weight vector $\b{\omega}'$, corresponding to the explicit space spanned using the updated eigenfunctions with $n+1$ samples, is equal to the updated eigenfunction matrix $\b{\Psi}'\in\mathbb{R}^{m\times (n+1)}$ applied to the preimage $\mathbf{k}'_{\b{\omega}}\stackrel{\Delta}{=}\b{\Phi}'\phi(\b{x}_\omega)$, i.e., $\b{\omega}' = \b{\Psi}'\mathbf{k}'_{\b{\omega}}$. Since the preimage $\mathbf{k}_{\b{\omega}}$ obtained from $\b{\omega}$ in \eqref{eq:preimage} has only $n$ components, i.e., missing the term $k(\b{x}_{n+1},\b{x}_\omega)$, we can either pad it with a zero to account for the missing $(n+1)$th sample basis function or equivalently, truncating $\b{\Psi}'$ to remove its last column, i.e., $\widehat{\b{\omega}}' = \b{\Psi}'(n)\b{\Psi}^{-1}_{\rm right}\b{\omega}$, where $\b{\Psi}'(n)\in\mathbb{R}^{m\times n}$ denotes the submatrix by deleting the $(n+1)$th column of $\b{\Psi}'$. This provides a stable solution, albeit the converge rate is slower since contribution of the $(n+1)$th sample has to be learned from scratch.

Alternatively, we can approximate the missing kernel function with its nearest neighbor evaluation, i.e. $	\mathbf{k}_{\b{\omega}}(i^*) = k(\b{x}_{i^*},\b{x}_\omega) \approx k(\b{x}_{n+1},\b{x}_\omega)$, where $i^* = \displaystyle\argmin_{1\leq i\leq n} \norm{\b{x}_i-\b{x}_{n+1}}^2$.

The algorithm for incremental eigendecomposition and transformation of the weight vector is summarized in Algorithm \ref{al:iSPEED}, using an auxiliary function $\mathrm{rank1update}$ that takes the previous eigensystem and updates it for a rank-1 additive perturbation.
	
\begin{algorithm}
	\label{al:iSPEED}
	\caption{Incremental SPEctral Eigenfunction Decomposition (iSPEED)}
		\textbf{Initialization:}\\
		$\{\b{x}_i\}_{i=1}^{n}$: dataset\\
		$m$: explicit space dimension\\
		$k(\cdot,\cdot)$: kernel function\\
		$\mathbf{\Lambda}$: eigenvalues of $\mathbf{K}_{n}$\\
		$\mathbf{V}$: eigenvectors of $\mathbf{K}_{n}$\\
		$\b{\Psi}$: eigenmap of $\mathbf{K}_{n}$\\
		$\b{\omega}$: weight vector in $\mathbb{R}^m$\\
		\textbf{Computation:}\\
		\For{$\b{x}_{n+1}$}{
		$\mathbf{\Lambda} \leftarrow [\mathbf{\Lambda}, k(\b{x}_{n+1},\b{x}_{n+1}) / 4]$\\
		$\mathbf{V} \leftarrow \begin{bmatrix} \mathbf{V} & 0 \\ 0 & 1 \end{bmatrix} $\\
		 $\rho \leftarrow 4 / k_{n+1,n+1}$\\
		 $\b{\kappa}_1 \leftarrow [k_{1,n+1}, k_{2,n+1}, \cdots k_{n+1,n+1}/2]$\\
		 $\b{\kappa}_2 \leftarrow [k_{1,n+1}, k_{2,n+1}, \cdots k_{n+1,n+1}/4]$\\
		 $\mathbf{\Lambda}, \mathbf{V} \leftarrow \mathrm{rank1update(}\mathbf{\Lambda}, \mathbf{V}, \rho, \b{\kappa}_1 \mathrm{)}$\\
		 $\mathbf{\Lambda}, \mathbf{V} \leftarrow \mathrm{rank1update(}\mathbf{\Lambda}, \mathbf{V}, -\rho, \b{\kappa}_2 \mathrm{)}$\\
		 Sort the eigenvectors by their eigenvalues.\\
		 $\b{\Psi} \leftarrow [\frac{1}{\sqrt[]{\lambda_1}}v_1^\transpose, \frac{1}{\sqrt[]{\lambda_2}}v_2^\transpose, \cdots \frac{1}{\sqrt[]{\lambda_m}}v_m^\transpose]^\transpose$\\
		 $\b{\omega} \leftarrow \b{\Psi}'(n)\b{\Psi}^{-1}_{\rm right}\b{\omega}$\\
		 \ \qquad or $\b{\Psi}'[\b{\Psi}^{-1}_{\rm right}\b{\omega};\mathbf{k}_{\b{\omega}}(i^*)]$\\
		 \ \qquad where  $i^* = \displaystyle\argmin_{1\leq i\leq n} \norm{\b{x}_{n+1}-\b{x}_i}^2$}
\end{algorithm}

\subsection{Sparse Update}
The complexity of the online update is less than that of the batch update, but is still significant. To further reduce the computational cost, we can combine the online update with a compact dictionary and reduce the update frequency. Only data samples that are sufficiently far away from existing samples in the dictionary are admitted and triggers an update. For stationary data, the eigendecomposition is expected to stabilize over time: as more data is observed, novelty decreases, thus diminishing the need for updates. More elaborate sparsification algorithms can also be used, such as the Nearest Instance Centroid-Estimation (NICE) approach \cite{NICE}. Algorithm \ref{alg:CD-SPEED} outlines the distance-based novelty detection update procedure.

\begin{algorithm}[h!]
	\textbf{Initialization:}\\
	$\b{\mathcal{D}}=\{\b{x}_i\}_{i=1}^{N}$: initial dictionary of training data\\
	$d_{\rm th}$: distance threshold\\
	$\b{\Psi}$: eigenmap of the kernel matrix on $\b{\mathcal{D}}$\\
	$\b{\omega}$: feature space weight vector corresponding to $\b{\Psi}$\\
	\textbf{Computation:}\\
	\For{$i = 1, 2, \cdots$}{
		$\displaystyle d_{\min}=\min_{1\leq j\leq |\b{\mathcal{D}}|} \norm{\b{x}_i-\b{x}_j}^2$\\
		\If{$d_{\min}\geq d_{\rm th}$}{
			$\b{\mathcal{D}}=\{\b{\mathcal{D}},\b{x}_i\}$\\
			Update $\b{\Psi}$ and $\b{\omega}$ using iSPEED (Algorithm \ref{al:iSPEED}).
		}
	}
	\normalsize
	\caption{Sparse Incremental SPEctral Eigenfunction Decomposition (siSPEED)}
	\label{alg:CD-SPEED}
\end{algorithm}

\section{Simulation Results}\label{Sec:Results}
To demonstrate the effectiveness of the eigenfunction-based algorithms proposed here, we perform one-step ahead prediction on the Mackey-Glass (MG) chaotic time series \cite{Mackey77}, defined by the delay differential equation
\begin{align*}
	\frac{d y_t}{d t} =\frac{\beta y_{(t-\tau)}}{1+y^{n}_{(t-\tau)}} -\gamma y_t
\end{align*}
where $\beta=0.2$, $\gamma=0.1$, $\tau=30$, $n=10$, discretized at a sampling period of 6 seconds using the fourth-order Runge-Kutta method, with initial condition $y_0 = 0.9$. Chaotic dynamics are highly sensitive to initial conditions, where even a small change in the current state can lead to vastly different outcomes over time. This makes long-term prediction intractable and is commonly known as the butterfly effect \cite{Ott02}. 

Additive white Gaussian noise with standard deviation of 0.02 is introduced to the MG time series. The data are standardized by subtracting the mean then dividing by its standard deviation, followed by diving the resulting maximum absolute value to guarantee the sample values fall within the range of $[-1,1]$ (for managing numerical error in approximation methods like Taylor series expansion). A time embedding (input dimension) of $d=7$ is used. Results are averaged over 100 independent trials (unless specified otherwise), with each training set consisting of 2000 consecutive samples from a random starting point in the time series, and the test set consists of 200 consecutive samples located in the future (at least 200 time steps away from the last training sample). 

\subsection{Kernel Principal Component Convergence and Accuracy}
To evaluate the KPCA performance, we assume that the data is centered. First, we demonstrate the fidelity of the Gram matrix reconstruction using eigenmaps in \eqref{eq:spectral_feature}, i.e., $\hat{\mathbf{K}} \stackrel{\Delta}{=}\widetilde{\b{\Phi}}(\b{X})^\top \widetilde{\b{\Phi}}(\b{X})\approx\mathbf{K}$. Fig. \ref{fig:Fnorm} shows the mean normalized Frobenius norm \begin{align}
	\norm{\mathbf{K} - \hat{\mathbf{K}}}_{\rm F} \stackrel{\Delta}{=} \sqrt{\frac{1}{mn}\sum_{i}^m\sum_{j}^n (|k_{ij}-\hat{k}_{ij}|/|k_{ij}|)^2}
\end{align} of the difference between a $500\times 500$ Gram matrix using the Gaussian kernel evaluation (with kernel parameter $\sigma = 1$) versus the reconstruction using the dot products in the explicit feature space as a function of the number of eigenfunctions used, averaged over 100 runs. Since sparsification yields a smaller dictionary size than the total available number of samples (500) in this experiment, for ${\rm SPEED}_{\rm sparse}$, we averaged the normalized Frobenius norm up to the minimum dictionary size (206) across trials. From Fig. \ref{fig:Fnorm}, we see that the eigenfunction approximations provide a high degree of accuracy even for small number of basis functions.

\begin{figure}[t!]
	\centering
	\includegraphics[width=0.4\textwidth]{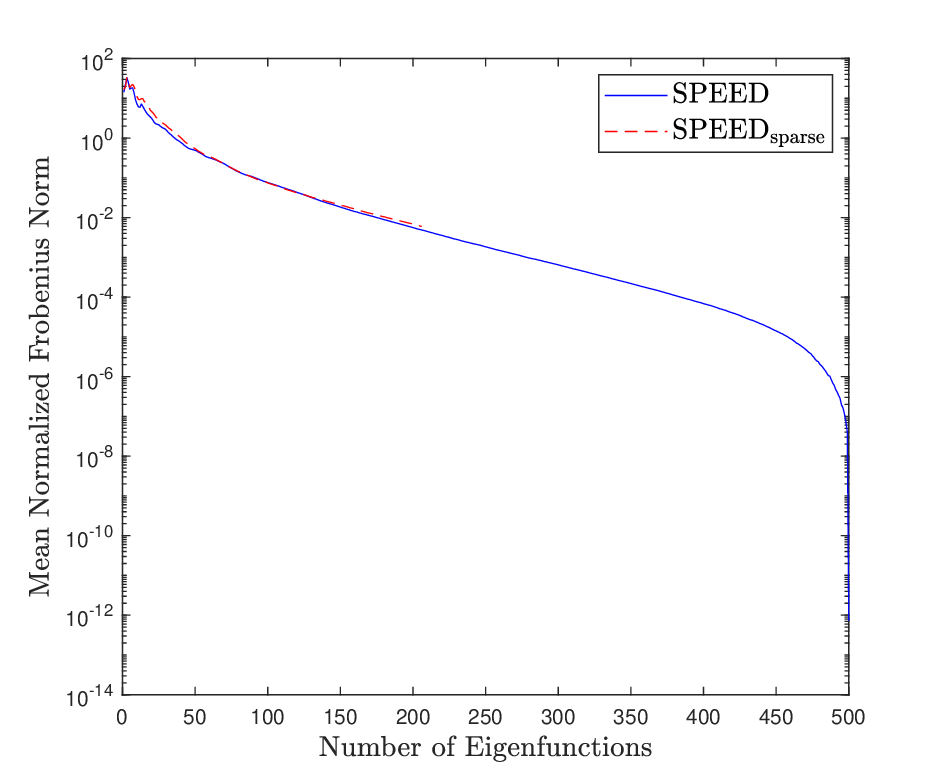}
	\caption{Normalized Frobenius norm of the difference between the Gram matrix $\mathbf{K}_{500}$ and the reconstructions using dot products of the eigenfunction features and the sparse eigenfunction features (distance threshold of 0.06), as a function of the number of eigenfunctions used, averaged across 100 runs.}
	\label{fig:Fnorm}
\end{figure}

Next, to compare the performance of the sparse incremental method (siSPEED) with the batch sparse method ${\rm SPEED}_{\rm sparse}$, we define the following distance measure between the principle axes angles \cite{IKPCA2007}
\begin{align}
	d_\theta(\mathcal{F},\mathcal{F}')\stackrel{\Delta}{=}\sqrt{\sum^m_{i=1}\theta^2_i}
\end{align}
where $\mathcal{F}$ and $\mathcal{F}'$ are $m$-dimensional subspaces in the RKHS and ${\theta_1,\cdots,\theta_r}$ are the principal angles between them, with $0\leq d_\theta \leq m\pi/2$. The subspace dimension $m$ was set to 20. The sparse incremental method was initialized using the first 100 candidate samples out of 2000 and the novelty or distance threshold was set to 0.06. Fig. \ref{fig:dist_angle} shows the subspace angular distance as a function of the number of sparse updates. We see that the initial subspace (batch of 100) converge toward the ground truth subspace (batch of 2000) without any drift.

\begin{figure}[t!]
	\centering
	\includegraphics[width=0.4\textwidth]{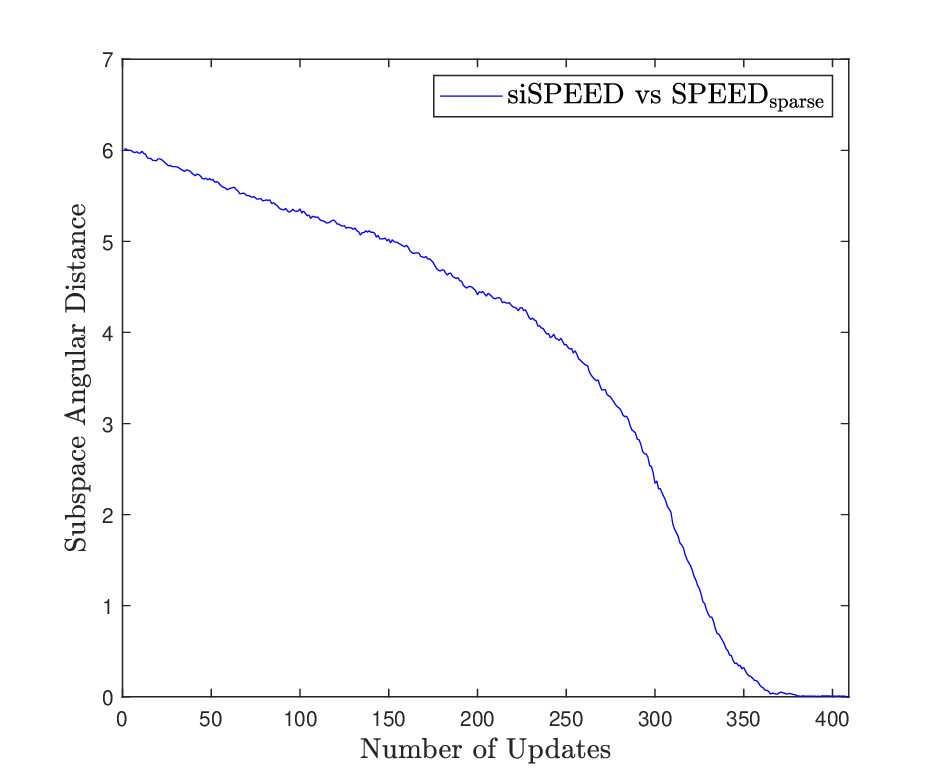}
	\caption{Mean subspace angular distance between the sparse incremental method and the batch method averaged over 500 trials.}
	\label{fig:dist_angle}
\end{figure}

\subsection{Explicit Feature Comparison}
In the first kernel adaptive filtering experiment, we compare the performances of well-known explicit feature maps using the mean squared error (MSE). The feature space or finite-dimensional RKHS dimension is set to $D = 330$, corresponding to Taylor polynomials of degrees up to 4, for input dimension $d = 7$. Two types of RFFs are randomly generated in each trial: RFF1 (sine-cosine pair) and RFF2 (cosine with nonshift-invariant phase noise). An $8$th-degree quadrature rule with sub-sampling is chosen to generate the explicit GQ feature mapping and is fixed across all trials. Similarly, the TS expansion mappings are completely deterministic and fixed for all trials. For the quantized KLMS (QKLMS) algorithm, an appropriate quantization factor of $q=0.06$ is selected to derive a comparable number of centers as the dimension of the explicit features.

Learning curves for the different types of KAF algorithms are plotted in Fig. \ref{fig:MG_pred} for a learning rate of $\eta = 0.1$, with the linear LMS as a baseline. The fixed dimensions are indicated in parentheses next to the explicit features used. The KLMS algorithm grows linearly with the number of training samples, so its final size is 2000. The QKLMS uses a subset of the samples that are sufficiently far away from each other, which results in 398 centers with the quantization factor used. For SPEED and the sparse SPEED features using a compact dictionary, the parenthesized dimension is presented as a fraction with the numerator indicating the feature space dimension $m$ or the number of eigenfunctions used as basis, and the denominator indicates the number of sample basis $n$ used to generate the feature map in \eqref{Eq:eigenfunction}.

We observe that a simple deterministic feature, such as Taylor series expansion, can outperform random features, and the linear LMS algorithm using GQ features outperformed all kernel-based finite-dimensional RKHS filters. The performance of data-based SPEED features exceeded that of all other explicit feature maps, while using 1 order of magnitude fewer features (30 vs. 330). This also demonstrates that SPEED has greater generalization capability and can reduce overfitting.

\begin{figure}[t!]
	\centering
	\includegraphics[width=0.4\textwidth]{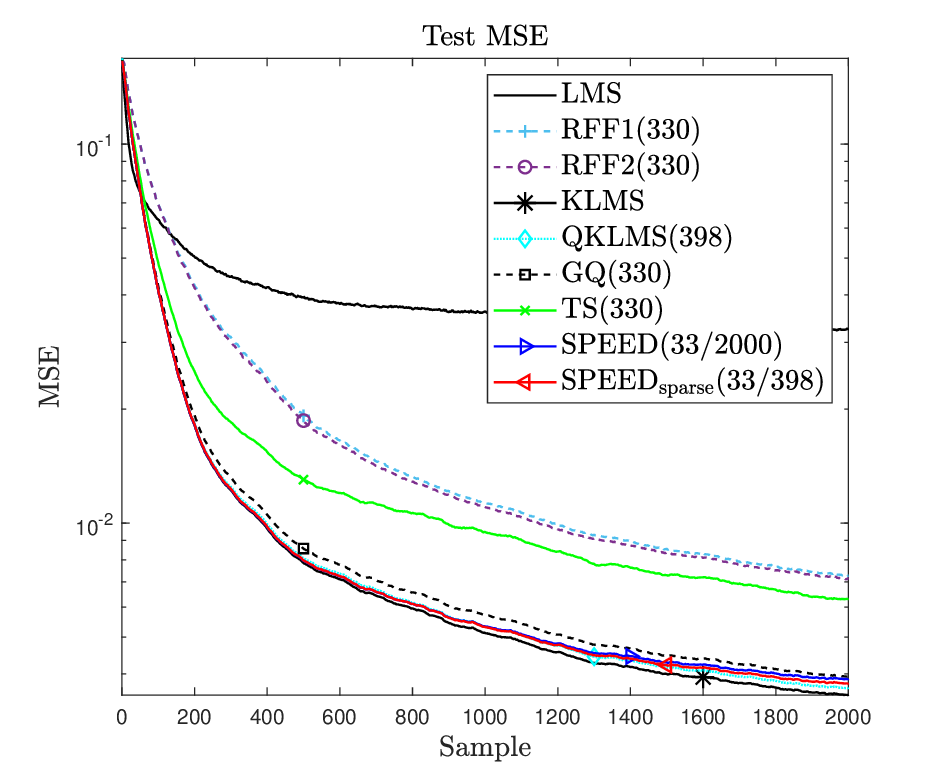}
	\caption{Learning curves averaged over 100 independent runs. Using eigenfunction features, SPEED and the sparse SPEED outperform other explicit features while using 1 order of magnitude fewer dimensions (33 vs. 330).}
	\label{fig:MG_pred}
\end{figure}

Fig. \ref{fig:LC_EFnumber} shows the final test performance as a function of the number of eigenfunctions used for SPEED and sparse SPEED. We observe that using 50 eigenfunctions, a linear LMS outperforms even the QKLMS algorithm and is comparable to the KLMS algorithm using 2000 sample basis functions.

 \begin{figure}[t!]
	\centering
	\includegraphics[width=0.4\textwidth]{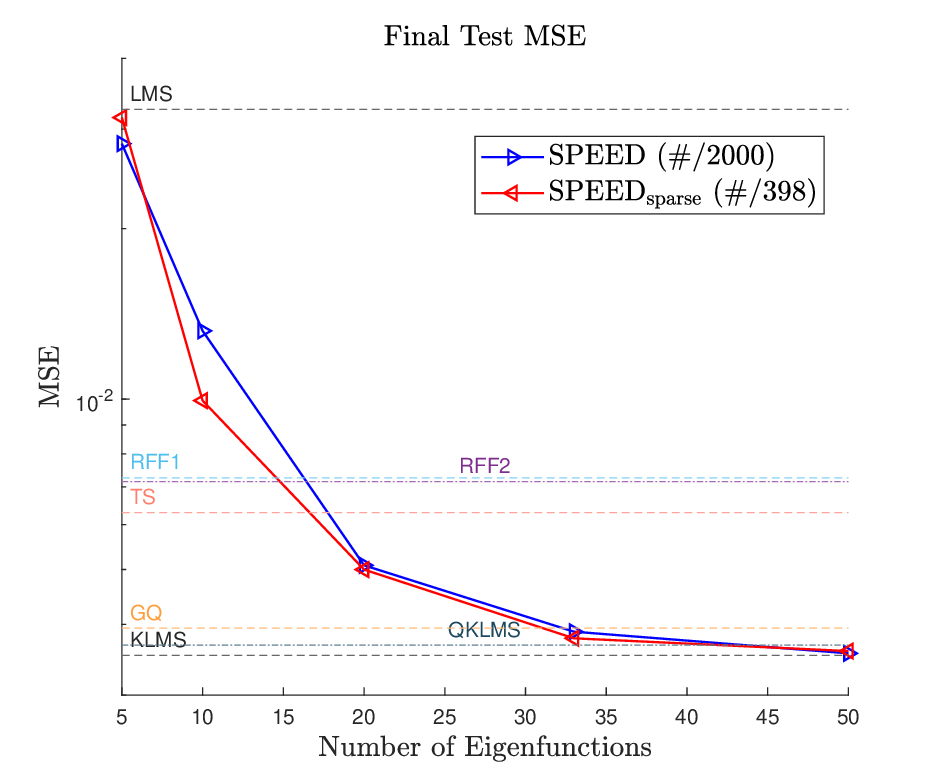}
	\caption{Final test MSE after training as a function of the number of eigenfunctions used for SPEED and sparse SPEED.}
	\label{fig:LC_EFnumber}
\end{figure}

Fig. \ref{fig:LC_batchsize} shows the final test MSE after training as a function of the batch size used for constructing the eigenmaps. Using 50 eigenfunctions, we observe that a small batch of the initial 100 samples is sufficient to outperform GQ features, and a batch of 500 samples outperformed all but the KLMS algorithm.
\begin{figure}[t!]
 	\centering
 	\includegraphics[width=0.4\textwidth]{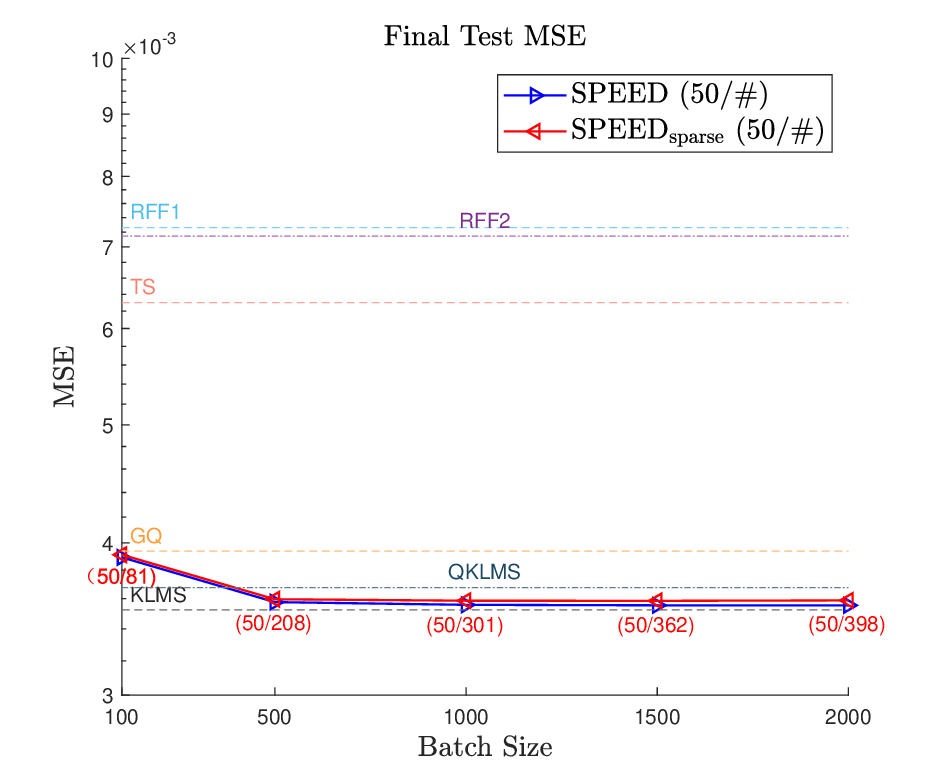}
 	\caption{Final test MSE after training as a function of the batch size used to decompose the eigenfunctions.}
 	\label{fig:LC_batchsize}
 \end{figure}
 
Fig. \ref{fig:KRLS} shows the test performance using the KRLS method. Again, we see the superior performance of the SPEED method. 
 
 \begin{figure}[t!]
 	\centering
 	\includegraphics[width=0.4\textwidth]{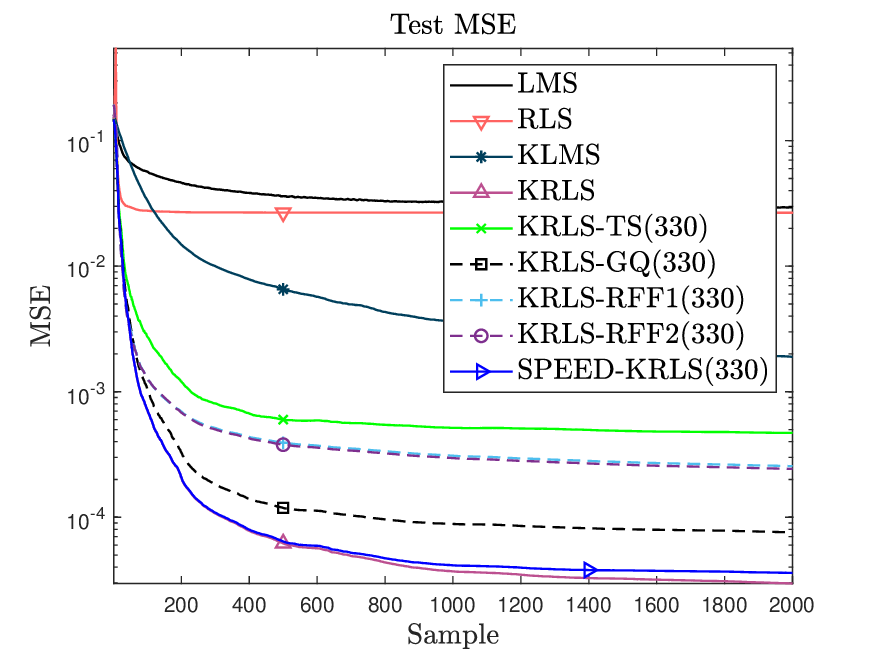}
 	\caption{KRLS test performances averaged over 100 runs.}
 	\label{fig:KRLS}
 \end{figure}
 
 \subsection{Online SPEctral Eigenfunction Decomposition (SPEED)}
 Next we demonstrate the capability of the sparse incremental algorithm. We seed siSPEED with the first 100 samples, then allow it to update based on the novelty criterion. Fig. \ref{fig:iSPEED} shows the continual learning performance of the algorithm as new samples are added to the eigenfunction decomposition. It is able to outperform the sparse SPEED using a fixed batch of 100 samples and eventually converge to the sparse SPEED and SPEED performance using all 2000 samples.
 \begin{figure}[t!]
	\centering
	\includegraphics[width=0.4\textwidth]{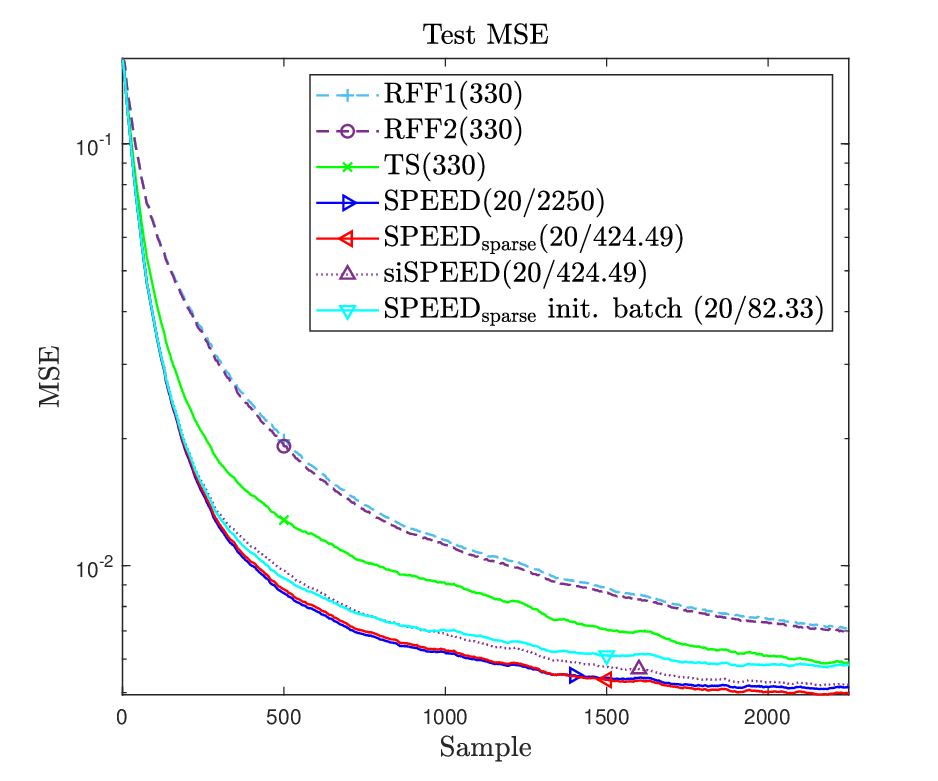}
	\caption{Test performance using sparse incremental SPEED.}
	\label{fig:iSPEED}
\end{figure}

\section{Conclusion}\label{Sec:Conclusion}
We presented a novel approach to simultaneously solve both dimensionality problems plaguing kernel adaptive filtering methods, by providing a Euclidean representation of the RKHS while reducing its dimensionality in a principled way. We also showed an efficient online algorithm for updating the eigensystem in the RKHS, which allows us to track the kernel eigenspace dynamically. Simulation results show that this framework is robust and can outperform similar methods, ideally suited for nonlinear adaptive filtering.

In the future we will further reduce the computational complexity of the explicit Hilbert space construction using novel dimensionality-reduction techniques and manifold learning. We will also apply this framework to other advanced signal processing techniques and model more complex signals and systems. A major advantage of the kernel method is that it operates on functions in the RKHS and changing the kernel function does not impact the underlying learning algorithm. One is free to choose the input representation, with an appropriate reproducing kernel. This opens the door to  solutions using neural recordings or spike trains.

\bibliographystyle{IEEEtran}
\bibliography{IEEEabrv,references}{}
\end{document}